\newtheorem{theorem}{Theorem}
\newtheorem{lemma}[theorem]{Lemma}
\newtheorem{corollary}[theorem]{Corollary}
\theoremstyle{definition}
\newtheorem{example}[theorem]{Example}
\newcommand{\Z}{{\mathbb Z}}
\newcommand{\I}{{\mathrm i}}
\newcommand{\e}{{\mathrm e}}
\begin{document}

\title[RG theoretical perturbation]
{A remark on renormalization group theoretical perturbation  \\
in a class of ordinary differential equations}

\author{Atsuo Kuniba}
\address{Atsuo Kuniba: Institute of Physics, 
University of Tokyo, Komaba, Tokyo 153-8902, Japan}
\email{atsuo.s.kuniba@gmail.com}

\maketitle

\vspace{0.5cm}
\begin{center}{\bf Abstract}
\end{center}
We revisit the renormalization group (RG) theoretical perturbation theory on  
oscillator-type second-order ordinary differential equations.
For a class of potentials,  we show a simple functional relation 
among secular coefficients of the harmonics
in the naive perturbation series.
It leads to an inversion formula between bare and renormalized amplitudes 
and an elementary proof of the absence of secular terms in all orders of the RG series.
The result covers nonautonomous as well as autonomous cases and 
refines earlier studies, including the classic examples 
of Van der Pol, Mathieu, Duffing, and Rayleigh equations.

\vspace{0.4cm}

\section{Introduction}

In second-order ordinary differential equations such as 
\begin{align}
\text{Van der Pol}:& \quad\frac{d^2y}{dt^2} + y+ \varepsilon (y^2-1)\frac{dy}{dt}= 0,
\label{vdp}
\\
\text{Mathieu}: & \quad\frac{d^2y}{dt^2} + y + \varepsilon (g+ 2 \cos t) y=0,
\label{mathi}
\\
\text{Duffing}: &\quad\frac{d^2y}{dt^2} + y +\varepsilon(\frac{dy}{dt} + g y^3)=0,
\label{duff} 
\\
\text{Rayleigh}: &\quad
\frac{d^2y}{dt^2}  + y + \varepsilon\frac{dy}{dt}\Bigl(\frac{1}{3}\bigl(\frac{dy}{dt}\bigr)^2-1\Bigr) = 0,
\label{ray}
\end{align}
naive perturbation around $\varepsilon=0$ leads to a series of the form
\begin{align}\label{yyb}
y = \sum_{k \ge 0} \sum_{n \in \mathbb{Z}}(\text{polynomial in $t$}) \varepsilon^k\mathrm{e}^{n \I t}.
\end{align}
The polynomial nature of the coefficients is called {\em secular} and invalidates 
any effective description beyond a typical time scale of 
$\mathcal{O}(\varepsilon^{-1})$.
The renormalization group (RG) theoretical approach 
is a successful example of singular perturbation \cite{BO, H, KT}
which circumvents the difficulty and offers an effective 
resummation of the divergent series. 
The basic strategy is to absorb the secular $t$-dependence of 
Eq.~(\ref{yyb}) into
renormalized amplitudes and describe the slow dynamics of the latter by 
the so-called RG equation.
The method has a rich background and perspectives;
see, for example, \cite{CGO,K,NO,O,T} and the references therein.

This short note is an elementary and modest addition to the well-developed machinery.
We focus on equations of the form
\begin{align}
\frac{d^2y}{dt^2}  + y  = \varepsilon V,
\qquad 
V  \;\;\text{an arbitrary polynomial in $\varepsilon,  
\e^{\pm \I t}, y$ and $\frac{dy}{dt}$},
\label{feq0}
\end{align}
which are a bit specific but cover Eqs.~(\ref{vdp})--(\ref{ray}) well.
The linear and $\frac{dy}{dt}$-free case like 
the Mathieu equation, Eq.~(\ref{mathi}), may also be viewed as 
a stationary Schr\"odinger equation in one-dimensional 
periodic potentials expressible with finitely many Fourier components. 

Let $\sum_{n \in \mathbb{Z}} P_n(\varepsilon,t,A,B)\e^{n\I t}$ 
be the naive perturbation series of Eq.~(\ref{yyb})
that reduces to $A\e^{\I t} + B \e^{-\I t}$ at $\varepsilon=0$.
Here $P_n(\varepsilon,t,A,B)$ is a power series in $\varepsilon$ 
which we call the secular coefficient. 
In particular,  
$P_{\pm1}(\varepsilon,t,A,B)$ are important in that the relevant $\e^{\pm \I t}$ 
are the {\em resonant} harmonics.
Our main finding is the functional relation (Corollary \ref{co:sasaki}) 
\begin{align}\label{ppp}
P_n(\varepsilon,t,A,B) = P_n(\varepsilon,t-s, P_1(\varepsilon,s,A,B),P_{-1}(\varepsilon,s,A,B)) \quad (\forall n \in \mathbb{Z}),
\end{align}
and the resulting refinements in the conventional RG approach.
For example, absence of secular terms 
in all orders of the RG series\footnote{A general proof of this basic fact
seems lacking in the literature.} follows immediately, and 
the manifest bijection in Eq.~(\ref{sasaki}) 
between the bare and renormalized amplitudes 
makes it unnecessary to introduce the 
so-called renormalization constants and resort to
the implicit function theorem.

Our proof of Eq.~(\ref{ppp}) is elementary 
and elucidates why it works naturally for 
$V$ of the form in Eq.~(\ref{feq0}).
\footnote{The form of $V$ in Eq.~(\ref{V})
 is a sufficient but {\em not} a {\em necessary} condition
for the  approach in this paper to work. 
See the remark after the proof of Lemma \ref{le:matsuko}.}
It would be interesting to explore a geometric and/or holographic 
interpretation of it \cite{C,N}, and 
to seek a similar structure in a wider class of differential equations. 

In Sect.~\ref{sec:na} a precise definition of $P_n(\varepsilon,t,A,B)$ is given.
In Sect.~\ref{sec:ps} the main result, Eq.~(\ref{ppp}), is proved.
In Sect.~\ref{sec:ra} the 
renormalized amplitude is introduced 
and the RG equation is derived in a one-line calculation.
In Sect.~\ref{sec:ex} the classical examples of Eqs.~(\ref{vdp})--(\ref{ray}) 
are treated along the scheme of the paper. 
Similar analyses are available in many works and 
we have no intention of claiming originality, 
not to mention the basic idea and flow of the RG analysis.
The last Sect.~\ref{ss:saori} includes an exercise 
on a nonlinear and nonautonomous example.

\section{Naive perturbation}\label{sec:na}

We study the second-order ordinary differential equation for $y=y(t)$ of the form
\begin{align}
\frac{d^2y}{dt^2}  + y  = \varepsilon V\Bigl(\varepsilon, \e^{\I t}, \e^{-\I t},y,\frac{dy}{dt}\Bigr).
\label{feq}
\end{align}
Here $\varepsilon$ is a parameter with respect 
to which the perturbation series is to be constructed.
The function 
$V=V\bigl(\varepsilon, \e^{\I t}, \e^{-\I t},y,\frac{dy}{dt}\bigr)$, 
which we call the {\em potential}, 
is a polynomial in the five variables indicated.
That is, we assume that $V$ has the form 
\begin{align}\label{V}
V = \sum_{k \in \Z,\, l,m,n \in \Z_{\ge 0}} C_{k l m n} \,\varepsilon^n \e^{k\I t}y^l 
\Bigl(\frac{dy}{dt}\Bigr)^m,
\end{align} 
where $C_{k l m n}$ is a coefficient independent of $\varepsilon$ and $t$, 
and is nonzero only for finitely many quartets $(k,l,m,n)$.
We work in the generic complex domain, so 
$C_{klmn} = C_{-klmn}^\ast$ need not be imposed.
Thus for example, $V = \e^{2\I t} y^2 + (4+\varepsilon \e^{-5\I t}) y\bigl(\frac{dy}{dt} )^3+ 7 \e^{-3\I t}$ 
is covered but $V = ty$ is not.
It should also be noted that $V$ is allowed to contain 
only {\it commensurate} driving terms in that
$k \in \Z$ in Eq.~(\ref{V}).
We would like to remove the trivial linear case
$V = C_1y + C_2\frac{dy}{dt} + \sum_k D_k\e^{k\I t}$,
so the existence of nonzero $C_{k l m n}$ with
$l+m \ge 1$ and $|k|\ge  \max(2-l-m,0)$ is assumed.

In naive perturbation we will be concerned with solutions of the form
\begin{align}
y(\varepsilon,t) = \sum_{n \in \Z}  \sum_{k\in \Z_{\ge 0}}\varepsilon^k f_{n,k}(t) \e^{n \I t},
\qquad f_{n,k}(t) \;\; \text{a polynomial in $t$}.
\label{ysol}
\end{align}
This is a formal power series in $\varepsilon$ 
and is also a formal Laurent series in $\e^{\I t}$.
As we will see, our construction always leads to  
$f_{n,k}(t) = 0$ for sufficiently large $|n|$ for each fixed $k$.
Therefore, any product, say $y^5\bigl(\frac{dy}{dt}\bigr)^3$, makes sense 
as a formal Laurent series in $\e^{\I t}$.

Consider the formal power-series expansion 
\begin{align}
y(\varepsilon, t) = y_0(t) + \varepsilon y_1(t) + \varepsilon^2 y_2(t) + \cdots,
\label{yi}
\end{align}
which corresponds to setting 
$y_k(t)  = \sum_{n \in \Z}f_{n,k}(t)\e^{n\I t}$  in Eq.~(\ref{ysol}).
Substituting Eq.~(\ref{yi}) into Eq.~(\ref{feq}), 
we get an equation for each power of $\varepsilon$:
\begin{align}
\frac{d^2 y_0}{dt^2} + y_0 &= 0,
\label{yeq0}
\\
\frac{d^2 y_1}{dt^2} + y_1 &= V\Bigl(0,\e^{\I t}, \e^{-\I t},y_0,\frac{dy_0}{dt}\Bigr),
\\
& \vdots
\\
\frac{d^2 y_k}{dt^2} + y_k &= \Bigl[V\Bigl(\varepsilon,\e^{\I t}, \e^{-\I t},
\sum_{j=0}^{k-1}\varepsilon^jy_j,
\sum_{j=0}^{k-1}\varepsilon^j\frac{dy_j}{dt}\Bigr)\Bigr]_{\varepsilon^{k-1}}\quad (k \ge 2).
\label{yeqn}
\end{align}
Here and in what follows, 
$[ \cdots ]_{x}$ denotes the coefficient of $x$.
The general solution to Eq.~(\ref{yeq0}) is 
\begin{align}
y_0(t) = A \e^{\I t} + B \e^{-\I t},
\label{y0}
\end{align}
where $A$ and $B$ are two independent complex parameters.
Starting from this, we can successively determine 
$y_k(t)$ with $k=1,2,3\ldots$ uniquely up to the freedom 
of adding $\alpha_k \e^{\I t} + \beta_k \e^{-\I t}$ 
for arbitrary $\alpha_k, \beta_k$ 
for each $k$.
Let us choose these $\alpha_k, \beta_k$ so that 
the constant term of $f_{\pm1, k}(t)$ becomes zero for all $k \ge 1$.
That is, we demand\footnote{There is no loss of generality compared with setting
$f_{\pm 1,k}(t=t_0) = 0$ with another parameter $t_0$.
This degree of freedom is essentially incorporated into the forthcoming 
Eq.~(\ref{cdk}).
The present convention was employed implicitly in Ref. \cite{K}.}
\begin{align}
f_{\pm 1,k}(t=0) = 0 \quad (\forall k \ge 1).
\label{f1}
\end{align}
This completely fixes $\alpha_k, \beta_k$, again successively, 
for $k=1,2,3,\ldots$ 
From the construction, it is easy to see the property
$f_{n,k}(t) = 0\, (|n|\gg 1, \, k\,\text{fixed})$  mentioned after Eq.~(\ref{ysol}).

We define $Y(\varepsilon,t,A,B)$ 
to be the resulting formal solution  $y(\varepsilon,t)$ in Eq.~(\ref{yi}).
It can also depend on other parameters in the potential $V$, 
like $g$ in Eqs.~(\ref{mathi}) and (\ref{duff}).
This last class of parameters 
inherent in $V$ will be suppressed in the notation below.
By using $Y(\varepsilon,t,A,B)$, 
define the quantity $P_n(\varepsilon, t,A,B)$ to be the coefficients 
occurring in the expansion into harmonics:  
\begin{align}
Y(\varepsilon,t,A,B) = \sum_{n \in \Z}P_n(\varepsilon, t,A,B)\e^{n\I t}.
\label{yp}
\end{align}
In terms of Eq.~(\ref{ysol}), this means setting 
$P_n(\varepsilon, t,A,B) = \sum_{k \ge 0}\varepsilon^k f_{n,k}(t)$.
We call  $A,B$  the {\em bare} amplitudes and 
$P_n(\varepsilon, t,A,B)$ the {\em secular coefficient} of the harmonics $\e^{n \I t}$.
The special case  $P_{\pm 1}(\varepsilon,t,A,B)$, 
which is called the {\em resonant secular coefficient} \cite{NO},
will play a key role in what follows. 
By definition they satisfy 
\begin{align}
&P_{\pm 1}(\varepsilon,0,A,B) =\begin{pmatrix}A \\ B \end{pmatrix},
\label{pt0}
\\
&P_n(0,t,A,B) = A\delta_{n,1}+  B\delta_{n,-1}.
\label{pe0}
\end{align}
In contrast to Eq.~(\ref{pt0}), $P_n(\varepsilon, 0,A,B)$ with $n \neq \pm 1$ is 
nontrivial in general, and will take part in the renormalized expansion; 
see Eq.~(\ref{deema2}).
The range of the $n$-sum in Eq.~(\ref{yp}) 
can be some subset of $\Z$ depending on the 
potential $V$.
For instance, in the list in Eqs.~(\ref{vdp}) -- (\ref{ray}),  
the sums are actually $\sum_{n \in 2\Z+1}$ 
except for Eq.~(\ref{mathi}).
In another example, $V = y \e^{2\I t}$, 
the sum reduces to $\sum_{n \in 2 \Z_{\ge 0}+1}$.

\begin{example} \label{ex:vdp}
Consider Van der Pol equation in Eq.~(\ref{vdp}), which corresponds to taking the 
potential as $V = (1-y^2) \frac{dy}{dt}$ in Eq.~(\ref{feq}).
Then the above definition leads to $P_n(\varepsilon,t,A,B) =0$ for $n$ even, 
as mentioned.
For odd $n$ we have the expansions in Table \ref{tab:hpc}, 
where the $k$th row $k\ge 0$ and the 
$n$th column $n\ge 1$ from the top left corner 
shows the polynomial $f_{k,2n-1}(t)$ 
in Eq.~(\ref{ysol}) corresponding to $Y(\varepsilon,t,A,B)$.
We have set $C=AB$ to save space.

\vspace{0.2cm}
\begin{table}[h]
\caption{Expansions for odd $n$ of the Van der Pol equation in Eq.~(\ref{vdp}).}
\label{tab:hpc}
{\small
\begin{tabular}{c|c|c|c|c}
& $\e^{it}$ & $\e^{3it}$ & $\e^{5it}$ & $\e^{7it}$ \\
\hline
$y_0$ & $A$ &0 &0 & 0\\ 
\hline
$y_1$ & $\frac{At}{2}(1-C)$ & $\frac{\I A^3}{8}$ & 0 &  0\\ 
\hline
$y_2$ & 
$\begin{matrix} 
\frac{At}{16}(-2\I + 8\I C -7\I C^2 \\ + 2t -8C t+6C^2 t)
\end{matrix}$ & 
$\begin{matrix}-\frac{\I A^3}{64}(-2\I - \I C \\  -12 t + 12 C t)\end{matrix}$
& $-\frac{5A^5}{192}$ & 0\\ \hline
$y_3$ & 
$\begin{matrix} -\frac{At}{384}(96C-210C^2+111C^3
\\
+24\I t -216\I C t + 444 \I C^2 t
\\ -252 \I C^3 t
-8t^2+104C t^2 \\
-216 C^2t^2+120C^3 t^2
 \end{matrix}$
& $\begin{matrix}
\frac{A^3}{512}(4\I - 42 \I C \\+29 \I C^2 
-92Ct \\+104C^2t
+72\I t^2 \\ -192 \I C t^2 + 120 \I C^2 t^2)
\end{matrix}$
& $\begin{matrix}  
\frac{5A^5}{4608}(-14\I-3\I C \\-60 t + 60 Ct)
 \end{matrix}$ & $-\frac{7 \I A^7}{1152}$ \\ 
\hline
&$\vdots$ &$\vdots$ &$\vdots$ &$\vdots$ \\
& $P_1(\varepsilon,t,A,B)$ & $P_3(\varepsilon,t,A,B)$
& $P_5(\varepsilon,t,A,B)$ & $P_7(\varepsilon,t,A,B)$
\end{tabular}}
\end{table}

%\vspace{0.2cm}\noindent
For instance, we have
\begin{equation}\label{p5}
P_5(\varepsilon,t,A,B)
= -\frac{5A^5\varepsilon^2}{192}
+ \frac{5A^5\varepsilon^3}{4608}(-14\I-3\I C -60 t + 60 Ct) + \mathcal{O}(\varepsilon^4).
\end{equation}
In this example, $P_n$ with negative $n$ is obtained by  
$P_{-n}(\epsilon,t,A,B) = P_{n}(\epsilon,t,B,A)\left|_{\I \rightarrow -\I}\right.$.
Such a relation is not valid in general for 
potentials that are not $\e^{\I t} \leftrightarrow \e^{-\I t}$ symmetric like
$V = y \e^{3\I t} + 2\frac{dy}{dt}\e^{-\I t}$.  
\end{example}

As this example demonstrates, the secular coefficients 
are formal power series in $\varepsilon$ 
whose leading order behaves as 
\begin{align}\label{ord}
P_n(\varepsilon,t,A,B)  \sim \mathcal{O}(\varepsilon^{d_n}),\quad d_n 
\rightarrow \infty \;\,\text{as}\; \,|n| \rightarrow \infty.
\end{align}

\section{Properties of secular coefficients}\label{sec:ps}

\begin{lemma}\label{le:mkawa}
Let $s$ be an arbitrary parameter.
The formal series $y=y(\varepsilon, t)$ of the form in Eq.~(\ref{ysol}) 
that satisfies the differential equation in Eq.~(\ref{feq}) and the conditions 
\begin{align}\label{erk}
\mathrm{(i)}\;\;y(0,t) &= A \e^{\I t} + B \e^{-\I t},\qquad 
\mathrm{(ii)}\;\;
\bigl[y(\varepsilon, t)\bigr]_{\e^{\pm \I t}}\left|_{t=s}\right. = 
P_{\pm 1}(\varepsilon, s, A,B)
\end{align}
is unique and given by $y(\varepsilon, t) = Y(\varepsilon,t,A,B)$ in Eq.~(\ref{yp}).
(The left-hand side of (ii) 
means the value of the resonant secular coefficients evaluated at $t=s$.)
\end{lemma}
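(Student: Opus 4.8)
The plan is to split the assertion into a near-trivial existence statement and an inductive uniqueness statement, both handled order by order in $\varepsilon$. For existence I would simply verify that $Y(\varepsilon,t,A,B)$ itself satisfies the three requirements: it solves Eq.~(\ref{feq}) by its very construction, condition (i) is exactly the $\varepsilon=0$ specialization recorded in Eq.~(\ref{pe0}) together with Eq.~(\ref{y0}), and condition (ii) holds tautologically, since the coefficient of $\e^{\pm\I t}$ in $Y$ is by definition $P_{\pm1}(\varepsilon,t,A,B)$, whose value at $t=s$ is $P_{\pm1}(\varepsilon,s,A,B)$. Thus the real content of the lemma is uniqueness.

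For uniqueness, let $y=\sum_{k\ge0}\varepsilon^k y_k$ be any series of the form in Eq.~(\ref{ysol}) obeying the ODE, (i) and (ii), and write $Y=\sum_{k\ge0}\varepsilon^k Y_k$. I would show $y_k=Y_k$ by induction on $k$. The base case $k=0$ is precisely condition (i) in view of Eq.~(\ref{y0}). For the inductive step, I would first record the structural fact that the right-hand side of the order-$\varepsilon^k$ equation, Eq.~(\ref{yeqn}), involves only $y_0,\dots,y_{k-1}$: in the expansion of $V$ in Eq.~(\ref{V}) each factor $y_j$ carries $\varepsilon$-weight $j$ and each explicit factor $\varepsilon^n$ carries weight $n$, and these must sum to $k-1$, so no $y_j$ with $j\ge k$ can appear. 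Hence, under the inductive hypothesis $y_j=Y_j$ for $j<k$, the functions $y_k$ and $Y_k$ solve the \emph{same} inhomogeneous equation $u''+u=R_k(t)$.

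Consequently $u:=y_k-Y_k$ solves the homogeneous equation $u''+u=0$, and within the class of Eq.~(\ref{ysol}) (polynomials in $t$ times $\e^{n\I t}$) its only solutions are $u=\gamma\,\e^{\I t}+\delta\,\e^{-\I t}$ with $\gamma,\delta$ constant: the exponentials $\e^{\pm\I t}$ already solve the equation, so no genuinely secular higher-degree polynomial can survive in the resonant channels. It then remains to pin down $\gamma,\delta$ via condition (ii). Comparing the coefficients of $\varepsilon^k$ in (ii) shows that the $\e^{\I t}$-coefficients of $y_k$ and of $Y_k$ take the same value at $t=s$; but $[u]_{\e^{\I t}}=\gamma$ is constant in $t$, so evaluating at $t=s$ forces $\gamma=0$, and symmetrically $\delta=0$ from the $\e^{-\I t}$ channel. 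Therefore $y_k=Y_k$, closing the induction.

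The whole argument is essentially bookkeeping; the one point that genuinely requires care is the structural claim that Eq.~(\ref{yeqn}) never feeds $y_k$ back into its own right-hand side, for this is what makes the order-by-order solution well posed and lets the two undetermined constants at each order be absorbed exactly by condition (ii). Everything else rests on the elementary fact that the homogeneous equation $u''+u=0$ contributes only the two $t$-independent constants in the $\e^{\pm\I t}$ channels.
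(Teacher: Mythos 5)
Your proof is correct and takes essentially the same approach as the paper: the paper's own proof simply appeals to the order-by-order construction of Section~\ref{sec:na}, where at each order in $\varepsilon$ the solution is determined up to $\alpha_k \e^{\I t}+\beta_k \e^{-\I t}$ and these constants are fixed by condition (ii), while existence is immediate since $Y(\varepsilon,t,A,B)$ fulfills (i) and (ii). Your induction merely spells out the details the paper leaves implicit, namely that the right-hand side of Eq.~(\ref{yeqn}) involves only lower orders and that the homogeneous solutions of $u''+u=0$ within the class of Eq.~(\ref{ysol}) are exactly the constant-coefficient resonant terms.
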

\begin{proof}
From the construction in the previous section, we know that the solution exists uniquely 
by choosing the parameters $\alpha_k, \beta_k$ appropriately so as to fit (ii) in 
each order of $\varepsilon$.
It is obvious that a solution $y(\varepsilon, t) = Y(\varepsilon,t,A,B)$ 
fulfills both (i) and (ii).
\end{proof}

The next lemma is the point where the specific form 
in Eq.~(\ref{V}) of the potential  $V$ matters.
\begin{lemma}\label{le:matsuko}
For arbitrary parameters $s, C$, and $D$, the formal series 
\begin{align}
\sum_{n \in \Z} P_n(\varepsilon, t-s, C,D) \e^{n \I t}
\end{align}
is also a solution to the differential equation in Eq.~(\ref{feq}).
\end{lemma}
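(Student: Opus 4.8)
The plan is to exhibit the candidate series as the image of the \emph{known} solution $Y(\varepsilon,t,C,D)$ under a linear operator that manifestly preserves the class of solutions. Concretely, for a formal series $F=\sum_{n\in\Z}F_n(\varepsilon,t)\e^{n\I t}$ of the form in Eq.~(\ref{ysol}), define $T_s$ by shifting the argument of every coefficient function while leaving the harmonics untouched:
\begin{align}
T_s\Bigl(\sum_{n\in\Z}F_n(\varepsilon,t)\e^{n\I t}\Bigr)=\sum_{n\in\Z}F_n(\varepsilon,t-s)\e^{n\I t}.
\end{align}
Since $Y(\varepsilon,t,C,D)=\sum_n P_n(\varepsilon,t,C,D)\e^{n\I t}$ by Eq.~(\ref{yp}), the series in the statement is precisely $g:=T_s\,Y(\varepsilon,\cdot,C,D)$. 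Because $Y(\varepsilon,\cdot,C,D)$ already solves Eq.~(\ref{feq}) (it is the naive solution for bare amplitudes $C,D$), it suffices to show that the linear operator $T_s$ carries solutions of Eq.~(\ref{feq}) to solutions, i.e.\ that $T_s$ commutes suitably with both sides of the equation.

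First I would record the elementary algebraic properties of $T_s$, each proved by comparing Fourier coefficients. (a) $T_s$ commutes with $\tfrac{d}{dt}$: differentiating $F_n(\varepsilon,t-s)\e^{n\I t}$ shows that the $n$th coefficient of $(T_sF)'$ equals $(\partial_tF_n)(\varepsilon,t-s)+n\I F_n(\varepsilon,t-s)$, i.e.\ the $n$th coefficient of $T_s(F')$; hence also $T_s(F''+F)=(T_sF)''+T_sF$. (b) $T_s$ is multiplicative, $T_s(FG)=(T_sF)(T_sG)$, since the convolution defining $(FG)_n=\sum_{p+q=n}F_pG_q$ is compatible with the common shift $t\mapsto t-s$. (c) $T_s$ passes a driving harmonic through, $T_s(\e^{k\I t}F)=\e^{k\I t}\,T_sF$ for $k\in\Z$, by the reindexing $n\mapsto n-k$; and trivially $T_s(\varepsilon^nF)=\varepsilon^n\,T_sF$.

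Combining (a)--(c), apply $T_s$ to the identity $Y''+Y=\varepsilon V(\varepsilon,\e^{\I t},\e^{-\I t},Y,Y')$. The left side becomes $g''+g$ by (a). For the right side, write $V$ as the finite sum of monomials $C_{klmn}\varepsilon^n\e^{k\I t}Y^l(Y')^m$ in Eq.~(\ref{V}); applying (b) and (c) termwise, and using $T_sY=g$, $T_sY'=g'$ from (a), each monomial is sent to $C_{klmn}\varepsilon^n\e^{k\I t}g^l(g')^m$. Summing, $T_s$ commutes with the whole nonlinearity, $T_s\bigl[V(\varepsilon,\e^{\I t},\e^{-\I t},Y,Y')\bigr]=V(\varepsilon,\e^{\I t},\e^{-\I t},g,g')$, so $g''+g=\varepsilon V(\varepsilon,\e^{\I t},\e^{-\I t},g,g')$, which is Eq.~(\ref{feq}) for $g$. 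That $g$ again has the admissible form in Eq.~(\ref{ysol}) (polynomial coefficients, with $f_{n,k}=0$ for $|n|\gg1$) is immediate since $t\mapsto t-s$ changes neither the degree nor the support in $n$, so all products are well defined as formal Laurent series.

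The only real content is step (c), and it is exactly here that the form in Eq.~(\ref{V}) is needed --- this is the anticipated obstacle. The pass-through $T_s(\e^{k\I t}F)=\e^{k\I t}T_sF$ uses crucially that $k\in\Z$ (commensurate driving), so that multiplication by $\e^{k\I t}$ merely permutes integer harmonics and never touches the coefficient functions on which $T_s$ acts. By contrast, a genuine polynomial-in-$t$ factor in $V$ would be fatal: for such a factor $T_s(tF)=(t-s)\,T_sF\neq t\,T_sF$, so $T_s$ would fail to commute with the nonlinearity. This is precisely why $V=ty$ is excluded while $V=\e^{2\I t}y^2+\cdots$ is allowed, and it indicates that the real requirement is that $V$ be assembled from objects equivariant under $T_s$ rather than the literal form of Eq.~(\ref{V}), consistent with the sufficiency-not-necessity remark to follow.
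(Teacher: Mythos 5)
Your proof is correct and is essentially the paper's own argument in operator clothing: the commutation properties (a)--(c) of your shift operator $T_s$ --- in particular the pass-through of integer harmonics in (c) --- are exactly what the paper expresses by recasting Eq.~(\ref{feq}) as the coefficient-wise system of Eq.~(\ref{anne}) and observing that it is autonomous, so that the shift $t \rightarrow t-s$ preserves solutions. Your closing observation that $k\in\Z$ (commensurate driving) is the crucial hypothesis, and that a factor like $t$ in $V$ would break the argument, likewise mirrors the paper's remark immediately following its proof.
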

\begin{proof}
The only nontrivial claim is that shifting $t$ to $t-s$ in $P_n$
without changing $\e^{n\I t}$ keeps it a solution. 
To show this, regard Eq.~(\ref{yp}) as a formal Laurent series in $\e^{\I t}$.
Then the original equation in Eq.~(\ref{feq}) is equivalent to the  
family of equations for the coefficient of each harmonic $\e^{m\I t}$:
for all $m \in \Z$,
\begin{equation}\label{anne}
\begin{split}
&\frac{\partial^2 P_m(\varepsilon,t,A,B)}{\partial t^2} + 
2im \frac{\partial P_m(\varepsilon,t,A,B)}{\partial t} + (1-m^2)P_m(\varepsilon,t,A,B) 
\\
&= \left[
\varepsilon V\Bigl(\varepsilon , \e^{\I t}, \e^{-\I t},
\sum_{n \in \Z} P_n(\varepsilon, t, A,B) \e^{n \I t}, 
\sum_{n \in \Z} 
\Bigl(\frac{\partial P_n(\varepsilon, t, A,B)}{\partial t} + in \Bigr) \e^{n \I t}
\Bigr)
\right]_{\e^{m \I t}}.
\end{split}
\end{equation}
In general, the right-hand side contains infinite sums like 
$\sum_{n_1+n_2+n_3=m+5} P_{n_1}P_{n_2}\frac{\partial P_{n_3}}{\partial t}$ 
when $V = \e^{-5\I t}y^2\frac{dy}{dt}$, for example.
However, thanks to Eq.~(\ref{ord}), they are actually convergent  and make sense 
as formal power series in $\varepsilon$.
The point here is that Eq.~(\ref{anne}) is a totally autonomous equation, i.e.
all the $t$ dependence is via $P_n$ and its derivatives.
Therefore, the shifted equation $t \rightarrow t-s$ is equally valid.
\end{proof}

The crux of the above argument is that Eq.~(\ref{anne}) 
is well defined and autonomous.
It clarifies why the nonautonomous part of $V$ 
has to be Laurent polynomials of $\e^{\I t}$ as in Eq.~(\ref{V}). 
Non-$\e^{\Z \I t}$-type dependence like 
$V = (2t\e^{\I t}+3t^2) y^2 \frac{dy}{dt}$ spoils the 
autonomous nature of the right-hand side of Eq.~(\ref{anne}), and hence 
invalidates the proof and the statement.
We leave consideration of a wider class of potentials like 
$V = \e^{y+y^2-\I t}$ as a future problem.

The main result of this paper is the following theorem and its consequences.
\begin{theorem}\label{th:tatsuki}
For any $s,t, A$, and $B$, the following identity between the formal series is valid:
\begin{align}
\sum_{n \in Z}P_n(\varepsilon, t,A,B)\e^{n\I t}
= \sum_{n \in \Z} 
P_n(\varepsilon,t-s,P_1(\varepsilon,s,A,B),P_{-1}(\varepsilon,s,A,B))\e^{n \I t}.
\label{cdk}
\end{align}
\end{theorem}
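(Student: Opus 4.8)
The plan is to combine the two preceding lemmas and invoke the uniqueness from Lemma \ref{le:mkawa}. Both sides of Eq.~(\ref{cdk}) are formal series of the form in Eq.~(\ref{ysol}), so it suffices to show that they coincide as solutions to the differential equation in Eq.~(\ref{feq}) by verifying that each satisfies the same two normalizing conditions (i) and (ii) of Lemma \ref{le:mkawa} with the \emph{same} data $A$, $B$. The left-hand side is $Y(\varepsilon,t,A,B)$ by definition, so it automatically satisfies both conditions. The whole task is therefore to show that the right-hand side is also a solution meeting conditions (i) and (ii) anchored at $A$, $B$; uniqueness then forces equality.

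First I would set $C = P_1(\varepsilon,s,A,B)$ and $D = P_{-1}(\varepsilon,s,A,B)$ and observe that the right-hand side of Eq.~(\ref{cdk}) is exactly the series appearing in Lemma \ref{le:matsuko} with these choices of $C$ and $D$. Hence by Lemma \ref{le:matsuko} it \emph{is} a solution to Eq.~(\ref{feq}). Next I would check condition (i): at $\varepsilon = 0$, the property in Eq.~(\ref{pe0}) gives $P_n(0,t-s,C,D) = C\,\delta_{n,1} + D\,\delta_{n,-1}$, and moreover $C|_{\varepsilon=0} = A$ and $D|_{\varepsilon=0} = B$ again by Eq.~(\ref{pe0}); so the right-hand side reduces to $A\e^{\I t} + B\e^{-\I t}$ at $\varepsilon=0$, which is precisely (i).

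The main step is condition (ii), namely evaluating the resonant coefficients of the right-hand side at $t = s$. The coefficient of $\e^{\pm\I t}$ in the right-hand side is $P_{\pm 1}(\varepsilon, t-s, C, D)$, and setting $t = s$ yields $P_{\pm 1}(\varepsilon, 0, C, D)$. By Eq.~(\ref{pt0}) this equals $C$ for the $+$ sign and $D$ for the $-$ sign, i.e.\ $P_1(\varepsilon,s,A,B)$ and $P_{-1}(\varepsilon,s,A,B)$ respectively. But these are by definition the values of the resonant secular coefficients of $Y(\varepsilon,t,A,B)$ at $t = s$, so condition (ii) for the right-hand side holds with the original data $A$, $B$. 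Thus the right-hand side satisfies (i), (ii), and Eq.~(\ref{feq}) with parameters $A$, $B$, and by the uniqueness in Lemma \ref{le:mkawa} it must equal $Y(\varepsilon,t,A,B)$, which is the left-hand side.

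The one point demanding care is that each of these identities is an equality of formal power series in $\varepsilon$ with polynomial-in-$t$ coefficients, so I would confirm that the substitution $A \mapsto C$, $B \mapsto D$ with $C$, $D$ themselves formal power series in $\varepsilon$ is well defined order by order; the estimate in Eq.~(\ref{ord}), together with $C - A$ and $D - B$ being $\mathcal{O}(\varepsilon)$, guarantees that the composed series are computable to any finite order in $\varepsilon$. I expect this bookkeeping to be the only genuine obstacle, since the structural argument via Lemmas \ref{le:mkawa} and \ref{le:matsuko} is otherwise immediate.
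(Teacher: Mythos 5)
Your proposal is correct and follows essentially the same route as the paper: invoke the uniqueness of Lemma \ref{le:mkawa}, use Lemma \ref{le:matsuko} to see that the right-hand side of Eq.~(\ref{cdk}) solves Eq.~(\ref{feq}), and check conditions (i) and (ii) of Eq.~(\ref{erk}). You merely spell out the verification that the paper calls ``straightforward,'' correctly noting that (i) follows from Eq.~(\ref{pe0}) while (ii) also uses Eq.~(\ref{pt0}), and that the substitution of $\varepsilon$-dependent amplitudes is well defined order by order thanks to Eq.~(\ref{ord}).
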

\begin{proof}
From Lemma \ref{le:mkawa}, 
it suffices to verify that the right-hand side of Eq.~(\ref{cdk}) is a solution 
to Eq.~(\ref{feq}) and satisfies conditions (i) and (ii) in Eq.~(\ref{erk}).
The fact that it is a solution is assured by Lemma \ref{le:matsuko}.
Checking Eq.~(\ref{erk}) is straightforward by using Eq.~(\ref{pe0}).
\end{proof}

Theorem \ref{th:tatsuki} tells us that the right-hand side 
of (\ref{cdk}) is independent of $s$.
In the RG context, it implies the independence of the choice of the initial time 
by a suitable renormalization of the amplitudes.
The novelty of Eq.~(\ref{cdk}) is that the required normalization is exactly achieved by
$P_{\pm 1}$ itself.

\begin{corollary}\label{co:sasaki}
The secular coefficients satisfy the functional relation
\begin{align}\label{toko}
P_n(\varepsilon,t,A,B) = P_n(\varepsilon,t-s, P_1(\varepsilon,s,A,B),P_{-1}(\varepsilon,s,A,B)) \quad (\forall n \in \Z).
\end{align}
\end{corollary}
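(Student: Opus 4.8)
The plan is to obtain Corollary~\ref{co:sasaki} as an immediate consequence of Theorem~\ref{th:tatsuki}, which has just been established. The corollary is nothing but the coefficient-wise reading of the series identity in Eq.~(\ref{cdk}). First I would recall that a formal Laurent series in $\e^{\I t}$ is zero if and only if each of its coefficients vanishes; equivalently, two such series coincide precisely when their coefficients of $\e^{n\I t}$ agree for every $n \in \Z$. This is a standard fact about formal series and requires no hypotheses on $V$ beyond what has already been assumed.

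The key step is then to equate the coefficient of $\e^{n\I t}$ on both sides of Eq.~(\ref{cdk}). On the left-hand side this coefficient is exactly $P_n(\varepsilon,t,A,B)$ by the defining expansion in Eq.~(\ref{yp}). On the right-hand side the series is already written in the form $\sum_{m \in \Z} P_m(\varepsilon,t-s,C,D)\e^{m\I t}$ with $C = P_1(\varepsilon,s,A,B)$ and $D = P_{-1}(\varepsilon,s,A,B)$, so its coefficient of $\e^{n\I t}$ is simply $P_n(\varepsilon,t-s,P_1(\varepsilon,s,A,B),P_{-1}(\varepsilon,s,A,B))$. Matching the two yields Eq.~(\ref{toko}) for every $n \in \Z$, which is the assertion of the corollary.

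The only point requiring a modicum of care is that the extraction of a single harmonic coefficient be legitimate, i.e.\ that both sides genuinely are formal Laurent series in $\e^{\I t}$ with well-defined coefficients in $n$. This is guaranteed by the order estimate in Eq.~(\ref{ord}), which ensures that for each fixed power of $\varepsilon$ only finitely many $P_n$ contribute, so the coefficient of each $\e^{n\I t}$ is a bona fide formal power series in $\varepsilon$. Since the substitution $t \mapsto t-s$ and the replacement of the amplitudes by $P_{\pm1}(\varepsilon,s,A,B)$ preserve this structure (the latter being themselves formal power series in $\varepsilon$ starting at order $\varepsilon^0$), the coefficient comparison is valid term by term in $\varepsilon$.

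I do not anticipate a genuine obstacle here: all the substantive work has been absorbed into Theorem~\ref{th:tatsuki}, and the corollary is a purely formal transcription. If anything, the mildest subtlety is bookkeeping, namely confirming that the same index $n$ labels the harmonic $\e^{n\I t}$ consistently on both sides after the shift $t \to t-s$ (the shift acts inside $P_n$ and does not touch the exponential $\e^{n\I t}$ multiplying it, exactly as arranged in Lemma~\ref{le:matsuko}). Once this is observed, the proof is a one-line appeal to the uniqueness of coefficients in a formal Laurent expansion.
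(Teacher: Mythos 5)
Your proposal is correct and matches the paper's (implicit) argument: the paper states Corollary~\ref{co:sasaki} without a separate proof precisely because it is the coefficient-wise reading of Eq.~(\ref{cdk}), i.e.\ equating coefficients of $\e^{n\I t}$ in the two formal Laurent series, with Eq.~(\ref{ord}) guaranteeing that each such coefficient is a well-defined formal power series in $\varepsilon$. Your attention to the bookkeeping (the shift $t \to t-s$ acting only inside $P_n$, not on the exponential) is exactly the point arranged by Lemma~\ref{le:matsuko} and requires no further argument.
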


Further specializing Eq.~$(\ref{toko})|_{n = \pm 1}$ to $t=0$ and 
applying Eq.~(\ref{pt0}) 
(with the reset $s \rightarrow -t$), we obtain an ``inversion" formula:
\begin{align}\label{haruko}
P_{\pm1}(\varepsilon,t,P_1(\varepsilon,-t,A,B),P_{-1}(\varepsilon,-t,A,B))=\begin{pmatrix} A\\ B\end{pmatrix}.
\end{align}

Although our derivation of Theorem \ref{th:tatsuki} and Corollary \ref{co:sasaki} 
has been quite elementary,  
their consequences are rather nontrivial.
For instance,  the first nontrivial assertion of Eq.~(\ref{toko}) 
about Eq.~(\ref{p5}) is
\begin{align}
P_5(\varepsilon,t,A,B) \equiv 
P_5\bigl(\varepsilon, t-s, A +\frac{1}{2}\varepsilon A s(1-C),  
B+\frac{1}{2}\varepsilon B s(1-C)\bigr)
\mod \mathcal{O}(\varepsilon^4).
\end{align}

The relation in Eq.~(\ref{toko}) 
is different from \cite[Eq.~(1.11)]{T} although they are 
formally similar.
In fact, the identity in Eq.~(\ref{cdk}) in terms of 
$Y(\varepsilon,t,A,B)$ of Eq.~(\ref{yp}) is {\em not} equivalent to 
$Y(\varepsilon,t,A,B) 
= Y(\varepsilon,t-s, P_1(\varepsilon,s,A,B), P_{-1}(\varepsilon,s,A,B))$
as we have $\e^{n\I t}$ rather than $\e^{n\I(t-s)}$ in the right-hand side.

\section{Renormalized amplitude and RG equation}\label{sec:ra}

We introduce $A_r(t)$ and $B_r(t)$ by either one of the 
following two sets of relations:
\begin{align}\label{sasaki}
\begin{pmatrix}
A_r(t)
\\
B_r(t) 
\end{pmatrix}
=
P_{\pm 1}(\varepsilon, t, A, B)
\;\;
\longleftrightarrow\;\;
\begin{pmatrix} A \\ B \end{pmatrix}
= P_{\pm 1}(\varepsilon, -t, A_r(t), B_r(t)).
\end{align}
Their equivalence is assured by the inversion relation in Eq.~(\ref{haruko}).
Now, Corollary \ref{co:sasaki} is stated as the identity
\begin{align}\label{deema0}
P_n(\varepsilon,t,A,B) = P_n(\varepsilon,t-s, A_r(s), B_r(s)). 
\end{align}
In particular the case $s=t$ reads 
\begin{align}\label{deema1}
P_n(\varepsilon,t, A,B) = P_n(\varepsilon,0,A_r(t), B_r(t)).
\end{align}
This relation proves that the secular $t$ dependence 
in the left-hand side can be eliminated totally by 
switching from the bare amplitudes $A,B$ to the new ones $A_r(t), B_r(t)$.
In this sense the variables $A_r(t), B_r(t)$ are called the {\em renormalized amplitudes}
\cite{CGO,NO,K}.
They allow us to rewrite the naive perturbation series in Eq.~(\ref{yp}) as
\begin{equation}\label{deema2}
\begin{split}
Y(\varepsilon,t,A,B) &= \sum_{n \in \Z}P_n(\varepsilon,0,A_r(t), B_r(t))\e^{n\I t}
\\
&= A_r(t)\e^{\I t}+B_r(t)\e^{-\I t} +  \sum_{n \in \Z\setminus \{\pm 1\}}
P_n(\varepsilon,0,A_r(t), B_r(t))\e^{n\I t}.
\end{split}
\end{equation}
By construction, 
the RG series in Eq.~(\ref{deema2}) is free of secular terms 
to all orders of $\varepsilon$.

The remaining task is to describe the dynamics or ``modulation" of the renormalized amplitudes
$A_r(t), B_r(t)$ entering Eq.~(\ref{deema2}).
The left relation in Eq.~(\ref{sasaki}) is certainly an answer, 
but there is no point in substituting it  
into Eq.~(\ref{deema2}) since it just brings us back to 
the original expansion in Eq.~(\ref{yp}), which is full of secular terms.
So we need to devise an alternative maneuver 
which suppresses the secular (nonautonomous) 
$t$ dependence totally so that $t$ always ``remains locked down in 
$A_r(t), B_r(t)$".\footnote{At the time of writing,
the number infected with COVID-19 in the world is 32356828.}
Now, with the exact renormalization in Eq.~(\ref{deema1}) at hand,
this can be done in a single line:
\begin{align}\label{aimi}
\frac{d}{dt}\!\!
\begin{pmatrix}
A_r(t) \\ B_r(t) 
\end{pmatrix}\overset{(\ref{sasaki})}{=} \!\frac{\partial P_{\pm 1}}{\partial t}(\varepsilon, t, A, B)
\overset{(\ref{deema0})}{=}
\frac{\partial P_{\pm 1}}{\partial t}(\varepsilon, t-s, A_r(s), B_r(s))
\overset{s\rightarrow t}{=}
\frac{\partial P_{\pm 1}}{\partial t}(\varepsilon, 0, A_r(t), B_r(t)).
\end{align}
In the last step we have changed $s$ to $t$.
This is allowed by the $s$ independence due to 
$\frac{\partial}{\partial s}\bigl(\frac{\partial P_\pm}{\partial t}
(\varepsilon,t-s,A_r(s),B_r(s)\bigr)
= \frac{\partial}{\partial t}\bigl(\frac{\partial P_\pm}{\partial s}
(\varepsilon,t-s,A_r(s),B_r(s)\bigr)
\overset{(\ref{deema0})}{=}0$.
From this maneuver it is clear that the $t$-derivative
in the last expression of Eq.~(\ref{aimi}) does not touch $ A_r(t), B_r(t)$.
The differential equation in Eq.~(\ref{aimi}) is called the RG or amplitude equation.
We see that the dynamics of the renormalized amplitude is 
governed by the resonant secular coefficients $P_{\pm 1}$ to all orders of $\varepsilon$.

Let us isolate the top term of the power series 
$P_{\pm 1}(\varepsilon,t,A,B)$ and name the other part  as $Q_{\pm 1}(\varepsilon, t, A, B)$:
\begin{align}
P_{\pm 1}(\varepsilon,t,A,B) = \begin{pmatrix}A  \\  B \end{pmatrix} 
+ \varepsilon Q_{\pm 1}(\varepsilon, t, A, B).
\label{qdef}
\end{align}
Then $Q_{\pm 1}(\varepsilon,t,A,B)= \sum_{k \ge 1}\varepsilon^{k-1} f_{\pm 1,k}(t)$ 
is still a power series in $\varepsilon$ such that 
\begin{align}
Q_{\pm 1}(\varepsilon,0,A,B) = 0
\label{qt0}
\end{align}
because of Eq.~(\ref{pt0}).
Now the RG equation in Eq.~(\ref{aimi}) is simplified slightly  as
\begin{align}\label{rg}
\frac{d}{dt}
\begin{pmatrix}
A_r(t) \\ B_r(t) 
\end{pmatrix} = \varepsilon
\frac{\partial Q_{\pm 1}}{\partial t}(\varepsilon, 0, A_r(t), B_r(t)),
\end{align}
where, as in Eq.~(\ref{aimi}),  
the $t$-derivative in the right-hand side does not concern $A_r(t), B_r(t)$.
This representation indicates that the RG dynamics is certainly ``slow" 
in the sense that the right-hand side is 
at last of order $\mathcal{O}(\varepsilon)$.

\vspace{0.3cm}
In the earlier works \cite{K,NO}, the right relation in 
Eq.~(\ref{sasaki}) was conventionally formulated as
\begin{equation}
A = A_r(t)Z_a(\varepsilon, t,A_r(t),B_r(t)), 
\qquad B = B_r(t)Z_b(\varepsilon,t,A_r(t),B_r(t))
\end{equation} 
by further introducing the so-called the renormalization constants $Z_a, Z_b$.
Moreover, reversing these relations had to be  attributed to the implicit function theorem.
One of the main achievements in this paper is the manifest bijection 
in Eq.~(\ref{sasaki}) 
between the bare and the renormalized amplitudes
that untangles these issues, 
and having elucidated its elegant origin in the functional relation of (\ref{toko}).
The abovementioned renormalization constants, 
although they can now be dispensed with, acquire a ``closed formula" as
\begin{align}
Z_a(\varepsilon, t,A_r(t),B_r(t)) & = \frac{P_1(\varepsilon, -t,A_r(t),B_r(t))}{A_r(t)}
= 1 + \varepsilon  \frac{Q_1(\varepsilon, -t,A_r(t),B_r(t))}{A_r(t)},
\\
Z_b(\varepsilon, t,A_r(t),B_r(t)) &= \frac{P_{-1}(\varepsilon, -t,A_r(t),B_r(t))}{B_r(t)}
= 1 + \varepsilon\frac{Q_{-1}(\varepsilon, -t,A_r(t),B_r(t))}{B_r(t)}.
\end{align}

In general, the solution $y(t)$ can either be stable 
around the nonperturbative one $y_0(t)$,  (\ref{y0}),
or unstably growing depending on $V$ no matter how small $\varepsilon$ is.
On general grounds we expect that the bare amplitudes $A$ and $B$ should be 
small enough for stability in the long time scale.
A quantitative result on such issues in a similar system 
is available, for example in \cite[Theorem 2.7]{C}.

\section{Examples}\label{sec:ex}

\subsection{Van der Pol equation}
We consider Eq.~(\ref{vdp}), which was also discussed in Example \ref{ex:vdp}.
Introduce the variables $R = R(t)$ and $\theta = \theta(t)$ connected 
to the renormalized amplitudes\footnote{This change of variables is optional.
The original $A_r(t)$ and $B_r(t)$ equally suit the numerical work.
The same feature applies to the other equations in this section.}
\begin{align}\label{Rte1}
A_r(t) =  R(t)\e^{\I \theta(t)}, \quad 
B_r(t) = R(t)\e^{-\I \theta(t)}.
\end{align}
Set $\tau = t + \theta(t)$.
Then the renormalized expansion, Eq.~(\ref{deema2}), reads
\begin{equation}
\begin{split}
y &= 2R \cos \tau -\frac{\varepsilon R^3}{4}\sin3\tau 
-\frac{\varepsilon^2R^3}{96}\bigl(6 \cos3 \tau + R^2 (3 \cos3 \tau + 5 \cos5 \tau)\bigr)
\\
&-\frac{\varepsilon^3R^3}{2304}\Bigl(
36 \sin3 \tau - 14 R^2 (27 \sin3 \tau + 5 \sin5 \tau) 
\\
& \qquad \qquad +  R^4 (261 \sin3 \tau - 15 \sin5 \tau - 28 \sin7 \tau)
\Bigr)+ \mathcal{O}(\varepsilon^4).
\end{split}
\end{equation}
The RG equation, Eq.~(\ref{rg}), is given by
\begin{align}
\frac{d \log R}{dt}&=\frac{\varepsilon(1-R^2)}{2}
-\frac{\varepsilon^3 R^2(32-70R^2+37R^4)}{128} \nonumber\\
& + \frac{\varepsilon^5R^4(-1980+8154R^2-10757R^4+4589R^6)}{36864}
\nonumber \\
&- \frac{\varepsilon^7R^4}{21233664}(2950992 - 16173432 R^2 + 28047688 R^4 \nonumber \\
& \qquad \qquad  \qquad - 14916436 R^6 - 4396557 R^8 + 
 4493323 R^{10})
 + \mathcal{O}(\varepsilon^9),
\\
\frac{d\theta}{dt} &= 
\frac{\varepsilon^2(-2+8R^2-7R^4)}{16}
+\frac{\varepsilon^4(-24-192R^2+1020R^4-1266R^6+497R^8)}{3072}
\nonumber\\
&+\frac{\varepsilon^6}{1769472}
\bigl(-1728-6912R^2+181872R^4-455608R^6  \nonumber \\
&\qquad \qquad  \qquad +121432R^8+417540R^{10}-266949R^{12}\bigr)
+ \mathcal{O}(\varepsilon^8).
\end{align}
By postulating $\frac{d \log R}{dt}=0$, we can find the values on the limit cycle:
\begin{align}
2R_c &=  2+ \frac{\varepsilon^2}{64}-\frac{23\varepsilon^4}{49152} - \frac{51619\varepsilon^6}{169869312}
+ \mathcal{O}(\varepsilon^{8}),
\label{rc}
\\
\bigl(\frac{d\theta}{dt}\bigr)_c &= -\frac{\varepsilon^2}{16}
+ \frac{17 \varepsilon^4}{3072}
+ \frac{35\varepsilon^6}{884736} + \mathcal{O}(\varepsilon^8).
\end{align}
The approximate leading value $2R_c=2$ is well known from the energy balance argument
that the total work by the friction term during a period should be zero, i.e.   by requiring 
$\int_0^{2\pi}(y^2-1)\bigl(\frac{dy}{dt}\bigr)^2 dt= 0$ for $y=2R_c\cos t$.

\subsection{Mathieu equation}
We consider Eq.~(\ref{mathi}) with $g$ dependent on $\varepsilon$ as 
$g= g_1 +  g_2\varepsilon +  g_3\varepsilon^2  + \cdots$.
Then, $Q_1(\varepsilon,t ,A,B) $ defined  in Eq.~(\ref{qdef}) is given by
\begin{equation}
\begin{split}
Q_1(\varepsilon,t ,A,B) 
&= \frac{\I A g_1 t}{2} -\frac{\varepsilon t}{24} \bigl(\I (8 A + 12 B + 3 A g_1^2 - 12 A g_2) + 3 A g_1^2 t \bigr)
\\
&+\frac{\varepsilon^3 t }{144}
\bigl(\I (88 A g_1 + 72 B g_1 + 9 A g_1^3 - 36 A g_1 g_2 + 72 A g_3) \\
&\qquad \qquad +  3 A g_1 (8 + 3 g_1^2 - 12 g_2) t - 3 \I A g_1^3 t^2
\bigr)+ \mathcal{O}(\varepsilon^4).
\end{split}
\end{equation}
The other one is obtained by
$Q_{-1}(\varepsilon,t ,A,B)  =  Q_1(\varepsilon,t ,B,A)\vert_{\I \rightarrow -\I}$. 
This example is exceptional among those in 
Eqs.~(\ref{vdp})--(\ref{ray}) in that it is the only equation
which is linear and,  moreover,  nonautonomous.
Reflecting the former feature, the RG equation also becomes linear.
In fact,  differentiation of Eq.~(\ref{rg}) 
can be combined and split into the two identical equations
\begin{align}\label{mab}
\frac{d^2A_r(t)}{dt^2} = -\omega^2 A_r(t), \qquad 
\frac{d^2B_r(t)}{dt^2} = -\omega^2 B_r(t),
\end{align}
where the constant $\omega^2$ is given by 
\begin{equation}
\begin{split}
\omega^2 = &\frac{\varepsilon^2}{4}  g_1^2 - \frac{\varepsilon^3}{24}  g_1 (8 + 3 g_1^2 - 12 g_2) 
\\ &-\frac{\varepsilon^4 }{576} (80 - 400 g_1^2 - 45 g_1^4 + 192 g_2 + 216 g_1^2 g_2 - 
    144 g_2^2 - 288 g_1 g_3)
    \\
&-\frac{\varepsilon^5}{3456} (-840 g_1 + 3920 g_1^3 + 189 g_1^5 - 4800 g_1 g_2 - 1080 g_1^3 g_2 + 
   1296 g_1 g_2^2 + 1152 g_3 \\
   &\qquad \qquad   + 1296 g_1^2 g_3 - 1728 g_2 g_3 - 
   1728 g_1 g_4) + \mathcal{O}(\varepsilon^6).
\end{split}
\end{equation}
The stable region $\omega^2>0$ and the unstable region $\omega^2<0$ 
of $A_r(t), B_r(t)$ are separated by the curve $\omega^2=0$.
Solving it order by order in $\varepsilon$ with respect to $g_1, g_2, g_3, \ldots$ 
yields two branches.
Let us present them for the combination $a:= 1 + \varepsilon g$
which is the usual coupling constant in the conventional setting of 
the Mathieu equation in Eq.~(\ref{mathi}) as 
\begin{align}\label{czk}
\frac{d^2y}{dt^2} + (a+ 2\varepsilon \cos t) y=0.
\end{align}
Then the branches are $a=a^{\pm}$, where
\begin{align}\label{apm}
a^- &= 1 -\frac{\varepsilon^2}{3} + \frac{5\varepsilon^4}{216} + \cdots, \qquad 
a^+ = 1 +\frac{5\varepsilon^2}{3} - \frac{763\varepsilon^4}{216} + \cdots.
 \end{align}
These curves in the $(a, \varepsilon)$ plane 
specify the boundaries of the unstable region $a^- <  a < a^+$
and the stable region in the vicinity of $(a, \varepsilon) = (1,0)$, 
reproducing part of Ref.~\cite[Fig.11.11]{BO}.
The separation into stable and unstable regions is a manifestation of the 
band structure in the Floquet-Bloch theory in the context of the 
Schr\"odinger equation in one-dimensional 
periodic potentials.

The result $a^\pm$ in Eq.~(\ref{apm}) agrees with the zeros of the determinants 
$\Delta^{\pm}(\varepsilon,a)=0$ of the 
semi-infinite Jacobi matrices\footnote{Symmetric, tridiagonal matrices with positive 
off-diagonal elements.  We imagine $\varepsilon$ is positive to reply on this nomenclature.
$\Delta^-(\varepsilon,a)$ and $\Delta^+(\varepsilon,a)$ 
correspond to $S_e(x)$ and $C_e(x)$ in Ref.~\cite[p.~176]{I}, respectively.}
 around the resonance \cite[sec.7-1]{I}, where 
\begin{align*}
\Delta^-(\varepsilon,a) &= \begin{vmatrix}
a-1^2 & \varepsilon &  & & & \\
\varepsilon & a-2^2 & \varepsilon &  & & \\
& \varepsilon & a-3^2 & \varepsilon & & \\
& & \varepsilon & a-4^2 & \varepsilon &  \\
 &  & &  \varepsilon &  \ddots &  
\end{vmatrix},
\\
\Delta^+(\varepsilon,a) &= \begin{vmatrix}
\frac{a}{2} & \varepsilon &  & & \\
\varepsilon & a-1^2 & \varepsilon   & & & \\
& \varepsilon & a-2^2 & \varepsilon &   & \\
& & \varepsilon & a-3^2 & \varepsilon  & \\
 &  & & \varepsilon &  \ddots & 
\end{vmatrix}.
\end{align*}

It is known that Mathieu equation in Eq.~(\ref{czk}) 
is in parametric resonance at infinitely many 
points $a= \frac{n^2}{4}  \,(n=1,2,3,\ldots)$ \cite[Sect.~11.4]{BO}.
The equation in Eq.~(\ref{mathi}) and the result in Eq.~(\ref{apm}) 
correspond to the $n=2$ resonance.
At $a= \frac{n^2}{4}$,  Eq.~(\ref{czk}) becomes  
$\frac{d^2y}{ds^2} + y + \frac{8\varepsilon}{n^2}y\cos(\frac{2s}{n})=0$
by switching to a new time variable $s=\frac{nt}{2}$.
Thus, a similar analysis to this paper is also possible for the $n=1$  resonance.
On the other hand the region around  $a= \frac{n^2}{4}$ with $n\ge 3$ 
is outside our assumption on $V$.
It will be an interesting exercise to see how the RG series 
fits the exact solution \cite{D}.

\subsection{Duffing equation}

Consider Eq.~(\ref{duff}) with $g=1$, 
which can be attained by $y\rightarrow y/\sqrt{g}$.
We have included a $\frac{dy}{dt}$ term since otherwise 
the equation is integrable by an elliptic function.
Introduce the variables $R = R(t)$ and $\theta = \theta(t)$ 
connected to the renormalized amplitudes by Eq.~(\ref{Rte1})
and set $\tau = t + \theta(t)$.
Then the renormalized expansion in Eq.~(\ref{deema2}) reads
\begin{equation}
\begin{split}
y &= 2R \cos \tau + \frac{\varepsilon R^3}{4} \cos 3\tau
+ \frac{\varepsilon^2 R^3}{32}\bigl(6 \sin 3\tau+R^2(\cos 5\tau-21\cos 3\tau)\bigr)\\
 &+ \frac{\varepsilon^3 R^3}{768}
 \Bigl(
-36\cos 3\tau -2R^2(567\sin 3\tau - 19 \sin 5\tau) \\
&\qquad \qquad +3R^4(417\cos3\tau-43\cos 5\tau + \cos 7\tau)
 \Bigr)
 + \mathcal{O}(\varepsilon^4).
 \end{split}
 \end{equation}
The RG equation in Eq.~(\ref{rg}) is given by
\begin{align}
\frac{d \log R}{dt}&=
 -\frac{\varepsilon}{2}
 +\frac{3\varepsilon^2  R^2}{4} 
-\frac{195 \varepsilon^3  R^4}{64}
+ \frac{5931\varepsilon^4  R^6}{512} + 
\frac{\varepsilon^5R^4(16092-172027R^4)}{4096}
+ \mathcal{O}(\varepsilon^6),
\\
\frac{d\theta}{dt} & = \frac{3\varepsilon R^2}{2}
-\frac{\varepsilon^2(2+15R^4)}{16}
-\frac{3\varepsilon^3 R^2(8-41R^4)}{128}
+\frac{\varepsilon^4(-8+4116 R^4-921R^8)}{1024}
\nonumber\\
&-\frac{3\varepsilon^5R^2(8+21305R^4-193R^8)}{2048}
+ \mathcal{O}(\varepsilon^6).
\end{align}

\subsection{Rayleigh equation}

We consider Eq.~(\ref{ray}).
Introduce the variables $R = R(t)$ and $\theta = \theta(t)$ connected to the renormalized amplitudes by Eq.~(\ref{Rte1})
and set $\tau = t + \theta(t)$.
Then the renormalized expansion in Eq.~(\ref{deema2}) reads
\begin{equation}
\begin{split}
y &= 2R \cos \tau + \frac{\varepsilon R^3}{12}\sin 3\tau +
\frac{\varepsilon^2 R^3}{96}\bigl(-6\cos 3\tau +R^2(9\cos 3\tau -\cos 5\tau)
\bigr)
\\
&+ \frac{\varepsilon^3 R^3}{2304}\Bigl(
-36 \sin3 \tau - 2 R^2 (63 \sin3 \tau + 17 \sin5 \tau) 
\\
&\qquad \qquad  +  R^4 (111 \sin3 \tau + 51 \sin5 \tau - 4 \sin7 \tau)
\Bigr)+ \mathcal{O}(\varepsilon^4).
\end{split}
\end{equation}
The RG equation in Eq.~(\ref{rg}) is given by
\begin{align}
\frac{d \log R}{dt}&=
\frac{\varepsilon(1-R^2)}{2}+\frac{\varepsilon^3R^4(22-13R^2)}{128}
\nonumber \\
&-\frac{\varepsilon^5 R^4(2268-1026R^2-2683R^4+1603R^6)}{36864}+ \mathcal{O}(\varepsilon^7),
\\
\frac{d\theta}{dt} & =
\frac{\varepsilon^2(R^4-2)}{16}
+ \frac{\varepsilon^4(-24+156R^4-234R^6+65R^8)}{3072}
\nonumber\\
&+ \frac{\varepsilon^6(-1728-98064R^4+305208R^6-210728R^8-71388R^{10}+84627R^{12})}{1769472}
+ \mathcal{O}(\varepsilon^8).
\end{align}

\subsection{A nonlinear and nonautonomous example}\label{ss:saori}

Finally, we consider a nonlinear and nonautonomous example:
\begin{align}\label{hiro}
\frac{d^2y}{dt^2} + y = 2\varepsilon \frac{dy}{dt} y\cos t.
\end{align}
Introduce the variables $R = R(t)$ and $\theta = \theta(t)$ 
connected to the renormalized amplitudes by Eq.~(\ref{Rte1}).
Then the renormalized expansion in Eq.~\ref{deema2}) reads
\begin{equation}\label{gs}
\begin{split}
y&= 2R \cos(\theta+t) + \frac{1}{4}\varepsilon R^2\sin(2\theta+3t)-
\frac{\varepsilon^2 R^3}{24}
\bigl(3 \cos \theta  \cos (2 \theta +3 t)+\cos (3 \theta +5 t) \bigr)
\\
&+\frac{\varepsilon^3R^4}{4608} \bigl(
-24 \sin (2 \theta +3 t)-33 \sin (4 \theta +7 t)+14 \sin \theta  \cos (3 \theta +5 t)+72
   \sin 2 \theta \cos (2 \theta +3 t)
   \\
   &+288 \cos 2 \theta  \sin (2 \theta +3 t)-146 \cos \theta \sin (3 \theta +5 t)
\bigr)+ \mathcal{O}(\varepsilon^4).
\end{split}
\end{equation}
The RG equation in Eq.~(\ref{rg}) is given by
\begin{align}
\frac{d\log R}{dt} &=
\frac{1}{2} \varepsilon R \cos \theta-\frac{1}{4} \varepsilon^2R^2 \sin 2\theta
+\frac{5}{16} \varepsilon^3 R^3 \cos \theta
\nonumber  \\
&-\varepsilon^4R^4 
   \left(\frac{21}{64} \sin 2\theta+\frac{1}{32}  \sin 4\theta\right)+O\left(\varepsilon^5\right),
 \label{rag1}\\
 \frac{d\theta}{dt} &=
 \frac{1}{2} \varepsilon R \sin \theta
-\varepsilon^2R^2 \left(\frac{1}{4}  \cos 2\theta+\frac{3 }{8}\right)+\frac{1}{8} \varepsilon^3 R^3 \sin \theta 
\nonumber \\
&+\frac{1}{128} \varepsilon^4R^4 \left(9\cos 2\theta-4 \cos 4 \theta -3\right)+O\left(\varepsilon^5\right).
 \label{rag2}
\end{align}
Unlike the Van de Pol, Duffing and Rayleigh 
equations, we have the essential mixture of $R$ and $\theta$
in the right-hand side of the RG equation reflecting 
the nonautonomous and nonlinear nature of Eq.~(\ref{hiro}).
As shown in Fig.~1, 
there seems only one peak in the envelop $R(t)$ in a certain parameter range.
 
\begin{figure}[H]
\begin{center}
\includegraphics[scale=0.55]{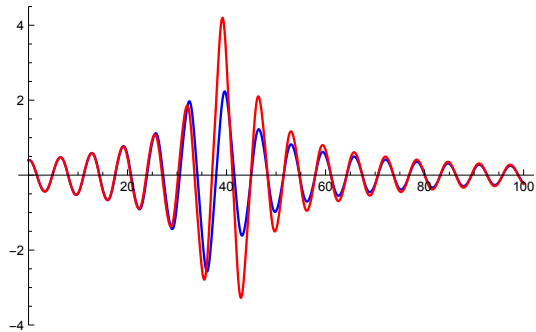}
\caption{Plot of $y(t)$ vs $t$ by the direct numerical solution of 
Eq.~(\ref{hiro}) (blue) and 
the RG expansion (red) started from the same initial condition 
$R(0)=0.2, \theta(0)=-0.1$ with $\varepsilon=0.25$.
We have kept only the $\varepsilon^0, \varepsilon^1$ terms in Eq.~(\ref{gs}) 
and the leading 
$\varepsilon^1$ term in Eqs.~(\ref{rag1}) and (\ref{rag2}).
Taking the $\varepsilon^2$ term in Eqs~(\ref{rag1}) and  (\ref{rag2}) 
into account already makes it too difficult to observe the discrepancy.}
\end{center}
\label{fig}
\end{figure}

\section*{Acknowledgments}

The author thanks Yoshitsugu Oono for a communication around 2007.
He is also grateful to the anonymous referee for productive comments.
This work is supported by 
Grants-in-Aid for Scientific Research Nos.~16H03922, 18H01141 and
18K03452 from JSPS.

% can use a bibliography generated by BibTeX as a .bbl file
% BibTeX documentation can be easily obtained at:
% http://www.ctan.org/tex-archive/biblio/bibtex/contrib/doc/

%\bibliographystyle{ptephy}
%\bibliography{sample}

\begin{thebibliography}{9}

\bibitem{BO}
C.~M.~Bender and S.~A.~Orszag,
\textit{Advanced Mathematical Methods for Scientists and Engineers,
Asymptotic Methods and Perturbation Theory}
(McGraw Hill, New York, 1978).

\bibitem{H}
E.~J.~Hinch,
\textit{Perturbation Methods}
(Cambridge Univ. Press,  Cambridge, 1991).

\bibitem{KT}
T.~Kawai and Y.~Takei,
\textit{Algebraic Analysis of Singular Perturbation Theory}
(AMS, Providence, RI, 2005).

\bibitem{CGO}
L-Y.~Chen, N.~Goldenfeld and Y.~Oono,
Phys. Rev.  E{\bf 54} 376 (1996). \\
\doi{https://doi.org/10.1103/PhysRevE.54.376}

\bibitem{K}
E.~Kirkinis, SIAM Rev. {\bf 54} 374 (2012).
\doi{DOI. 10.1137/080731967}

\bibitem{NO}
A.~Nozaki and Y.~Oono, Phys. Rev. E {\bf 63} 046101 (2001).
\doi{https://doi.org/10.1103/PhysRevE.63.046101}

\bibitem{O}
Y.~Oono, 
Int. J. Mod. Phys. B {\bf 14} 1327 (2000).
\doi{https://doi.org/10.1142/S0217979200001035}

\bibitem{T}
E.~V.~Teodorovich,
J. Appl. Math. Mech. {\bf 68} 299 (2004).
\doi{https://doi.org/10.1016/S0021-8928(04)90029-9}

\bibitem{C}
H.~Chiba,
SIAM J. Appl. Dyn. Syst. {\bf 8}, 1066  (2009).
\doi{https://doi.org/10.1137/090745957}

\bibitem{N}
Yu.~Nakayama,
Phys. Rev. D {\bf 88} 105006 (2013).
\doi{https://doi.org/10.1103/PhysRevD.88.105006}

\bibitem{I} E.~L.~Ince,
\textit{Ordinary Differential Equations}
(Dover, New York,  1926).

\bibitem{D}
D.~J.~Daniel, 
Prog. Theor. Exp. Phys. {\bf 2020}, 043A01 (2020).
\doi{ https://doi.org/10.1093/ptep/ptaa024}

\end{thebibliography}
%
% once the .bbl file has been generated then place the text in your article.

\vspace{0.2cm}
\noindent
%For references,  note how to include DOI information from examples below. 

%This is added by T. Yoneya (editor-in-chief) on 2020/07/09.

\let\doi\relax

\end{document}